\newtheorem{thm}{Theorem}
\newtheorem{prop}{Proposition}
\newtheorem{defn}{Definition}
\newtheorem{cor}{Corollary}
\newcommand{\bx} {\boldsymbol{x}}
\newcommand{\by} {\boldsymbol{y}}
\newcommand{\bc} {\boldsymbol{c}}
\newcommand{\ba} {\boldsymbol{a}}
\newcommand{\cnv} {\mathbf{conv}}
\newcommand{\gl}{\lambda}
\def\bal#1\eal{\begin{align}#1\end{align}}
\newcommand{\bp} {\begin{proof}}
\newcommand{\ep} {\end{proof}}
\newcommand{{\bRF}} {\right\}}
\begin{document}

\title{Optimal Location of Cellular Base Station via Convex Optimization}

	
	

\author{\IEEEauthorblockN{Elham Kalantari\IEEEauthorrefmark{1},
		Sergey Loyka\IEEEauthorrefmark{1},
		Halim Yanikomeroglu\IEEEauthorrefmark{2}, and 
		Abbas Yongacoglu\IEEEauthorrefmark{1}}
	\IEEEauthorblockA{\IEEEauthorrefmark{1}School of Electrical Engineering and Computer Science\\
		University of Ottawa, Ottawa, ON, Canada, Email: \{ekala011, sergey.loyka, yongac\}@uottawa.ca}
	\IEEEauthorblockA{\IEEEauthorrefmark{2}Department of Systems and Computer Engineering\\
		Carleton University, Ottawa, ON, Canada, Email: halim@sce.carleton.ca}}

\maketitle


\begin{abstract}
    An optimal base station (BS) location depends on the traffic (user) distribution, propagation pathloss and many system parameters, which renders its analytical study difficult so that numerical algorithms are widely used instead. In this paper, the problem is studied analytically. First, it is formulated as a convex optimization problem to minimize the total BS transmit power subject to quality-of-service (QoS) constraints, which also account for fairness among users. Due to its convex nature, Karush-Kuhn-Tucker (KKT) conditions are used to characterize a globally-optimum location as a convex combination of user locations, where convex weights depend on user parameters, pathloss exponent and overall geometry of the problem. Based on this characterization, a number of closed-form solutions are obtained. In particular, the optimum BS location is the mean of user locations in the case of free-space propagation and identical user parameters. If the user set is symmetric (as defined in the paper), the optimal BS location is independent of pathloss exponent, which is not the case in general. The analytical results show the impact of propagation conditions as well as system and user parameters on optimal BS location and can be used to develop design guidelines.
\end{abstract}


\section{Introduction}

The problem of base station (BS) location in cellular networks has been extensively studied in the existing literature, see e.g. \cite{Amaldi2003}-\cite{Kalantari2017pimrc}. A number of optimization algorithms have been proposed to attack this problem numerically, taking into account a number of practically-important parameters and limitations. Many of the proposed algorithms use a pre-selected finite list of candidate sites where the BS could potentially be located and look for the ones that optimize some objective function amongst that list \cite{Amaldi2003}-\cite{Zhao2017}. The considered problems are formulated as mixed integer programming or combinatorial optimization and the methods to solve them include simulated annealing, Tabu search, simplex method and branch and bound algorithm, etc. While these approaches can be useful in practice, their common feature is that the considered problems are NP-hard (i.e. the numerical complexity grows very fast with the problem size), and convergence of algorithms to a global optimum cannot be guaranteed. A different approach is adopted in \cite{Sherali1996}, where the weighted sum pathloss (to all users) was minimized without any pre-selected BS locations. Several numerical algorithms for local optimization were used, such as Hooke-Jeeves’, quasi-Newton and conjugate gradient search. However, the cost function was introduced in an ad-hoc manner, without any explicit link to typical system-level performance indicators (e.g. total power or energy efficiency), and the resulting optimization problem was not convex.

While the above algorithms are useful from the practical perspective, they have a number of limitations at the fundamental level. In all these algorithms, convergence to a global optimum cannot be guaranteed either due to non-convexity of underlying optimization problems or inherent limitations (approximations) of the algorithms. Furthermore, a gap to a globally-optimal solution is not known or bounded either. Due to the numerical nature of the algorithms, very limited or no insights are available. No closed-form solutions to the considered problems are known either.

In this paper, we adopt a different approach. Optimal BS location is modeled as a convex optimization problem to minimize the total BS transmit power, subject to per-user quality-of-service (QoS) constraints, which also account for fairness among users. Due to the convex nature of our formulation, the respective KKT conditions are sufficient for global optimality, from which a number of closed-form solutions can be obtained and numerical algorithms can also be built with a guaranteed convergence to a global optimum (using e.g. the barrier method) \cite{Boyd2004}. The emphasis of this paper is on the analysis, closed-form solutions and insights they facilitate, rather than on numerical algorithms.
The system model is introduced in Section \ref{sec.model}. This model may represent actual users in a cellular system with their rate requirements as well as expected user distributions (e.g. in business or apartment buildings, shopping centers and other social attractors); expected traffic demands in different locations can also be represented in this way via virtual users (whose locations and number are representative of the expected traffic demand). The considered model and approach are general enough to include any user rate that is a monotonically-increasing function of the SNR and hence can include fading, in addition to the average pathloss, as well as non-uniform user distributions. It also applies to 3D scenarios, typical of unmanned aerial vehicles (UAVs) or other mobile BSs \cite{Kalantari2016}-\cite{Cao-18}. While our model is not as general as some other models in the literature, it makes the problem analytically-tractable and a number of novel closed-form solutions and properties follow.

An optimal BS location subject to QoS constraints is formulated as a convex optimization problem to minimize the total BS transmit power, which is a key to the further development. Based on this formulation, an optimal BS location is characterized as a convex combination of user locations in the general 3D case, where the convex weights depend on user bandwidth and rate demands, some system and propagation parameters, and overall geometry of the problem, see Theorem \ref{thm.c*}. This characterization is subsequently used to obtain a number of explicit closed-form solutions for an optimal BS location (to the best of our knowledge, for the first time). In the case of free-space propagation, the optimal BS location is  the weighted mean of user locations (where the weights are determined by system parameters). If, in addition, the users have identical parameters (rate and bandwidth), the problem further  reduces to the well-known facility location problem (in squared Euclidean norm metric) and the optimal BS location is the mean of user locations. Our novel contribution here is that this BS location also minimizes its total transmit power under free-space propagation and identical system parameters of the users (but not otherwise in general).

We further show that this result also applies to other propagation environments (with other pathloss exponents), provided that the set of users is symmetric in a certain way, see Definitions 2 and 3. Hence, this result is more general than originally expected. Furthermore, the optimal BS location is also independent of pathloss exponent $\nu$ in this case while $\nu$ has a profound impact on it for asymmetric user sets. In the case of large pathloss exponent, the optimal BS location is determined by the most distant users. An unusual property is observed whereby an optimal BS location is not necessarily unique: while it is always unique when the pathloss exponent $\nu >1$, this is not the case with $\nu=1$.
These results are further extended to include additional location constraints (due to e.g. existing infrastructure) as well as elevated BS scenarios (e.g. UAV-BS), see Theorems \ref{thm.c*.h} and \ref{thm.c*.loc}.

The analytical results above, i.e. an optimum BS location (to minimize its total transmit power), its geometric properties as well as the impact of pathloss exponent and user distribution on this location are, to the best of our knowledge, novel and cannot be found in the existing literature. They render insights unavailable from purely numerical studies, which can be subsequently applied to obtain design guidelines for more complicated scenarios, for which no analytical solutions are known.

It is worthwhile to note that, in the special case of pathloss exponent $\nu=1$ and identical user parameters, the problem considered here reduces to the celebrated "Fermat-Weber" problem \cite{Drezner2002}, which is to find a point that minimizes the sum of its distances to a set of given points, and for which no closed-form solution is known to this day in the general case. To quote \cite{Drezner2002}, \textit{"The Weber problem ... has a long and convoluted history. Many players, from many fields of study, stepped on its stage, and some of them stumbled. The problem seems disarmingly simple, but is so rich in possibilities and traps that it has generated an enormous literature dating back to the seventeenth century, and continues to do so."}

\section{System Model and Problem Formulation}
\label{sec.model}

Let us consider a BS serving $N$ users located at $\bx_k,\ k = 1,.., N$, via some form of orthogonal multiple-access technique (e.g. FDMA). We require user rates $R_k$ to be monotonically-increasing functions of the SNR, e.g.
\bal
\label{eq.Rk}
R_k = \Delta f_k \log(1+\gamma_k/\Gamma_k),
\eal
where $\Delta f_k$ and $\gamma_k= P_{rk}/\sigma_{0k}^2$ are the bandwidth and the SNR of user $k$, the channel is frequency-flat with AWGN noise of power $\sigma_{0k}^2$ and $P_{rk}$ is the signal power received by user $k$; $\Gamma_k \ge 1$ is the SNR gap to the capacity of user $k$ \cite{Forney-98}. When efficient (capacity-approaching) codes are used for each user, $\Gamma_k \rightarrow 1$. The received power $P_{rk}$ is related to the transmit power $P_k$ allocated by the BS to user $k$ via the pathloss model, see e.g. \cite{Rappaport2001},
\bal
P_{rk} = \alpha_k P_k/d_k^{\nu_k},
\eal
where $d_k = |\bc - \bx_k|$  is the distance between the BS located at $\bc$ and user $k$ located at $\bx_k$, $|\bx|$ is the Euclidean norm (length) of vector $\bx$, $\nu_k$ is the pathloss exponent, and $\alpha_k$ is a constant related to the propagation environment, which is independent of distance but may depend on frequency. For example, in the case of free-space propagation environment, e.g. when line-of-sight (LoS) path is dominant, $\nu_k=2$ and $\alpha_k = (\lambda_k/(4\pi))^2$, where $\lambda_k$ is the wavelength of user $k$, while for the  2-ray ground reflection model $\nu_k=4$ and $\alpha_k = h_t^2 h_{rk}^2$, where $h_t, h_{rk}$ are the BS and user $k$ antenna heights \cite{Rappaport2001}, all in the far-field. 

We assume that the BS knows the pathloss to each user (or, equivalently, its SNR). To satisfy QoS requirements, each user rate must not be less than its target rate $R_{0k}$: $R_k \ge R_{0k}$. To achieve this objective in an energy-efficient way, the operator selects BS location $\bc$ in an optimal way to minimize its total transmit power $P_T = \sum_k P_{k}$ subject to the QoS constraints as follows:
\begin{equation}
\label{eq.P1}
\begin{aligned}
\text{\textrm{(P1)}}~~& \underset{\{P_{k}\}, \bc}{\text{min}}
& & \sum_k P_{k}\  \ \text{s.t.}\ \ R_k \ge R_{0k},
\end{aligned}
\end{equation}
where the optimization variables are BS location $\bc$ as well as per-user powers $\{P_k\}$, so that the BS performs optimal per-user power allocation as well. The rate constraints $R_k \ge R_{0k}$ also ensure fairness among users. Noting from \eqref{eq.Rk} that the constraint $R_k \ge R_{0k}$ is equivalent to $\gamma_k \ge \gamma_{0k}= (2^{R_{0k}/\Delta f_k}-1)\Gamma_k$, the problem (P1) can be re-formulated as follows:
\begin{equation}
\label{eq.P2}
\begin{aligned}
\text{\textrm{(P2)}}~~& \underset{\{P_{k}\}, \bc}{\text{min}}& \sum_k P_{k}\ \ \text{s.t.}\ \ P_{k} \ge  \beta_k |\bc - \bx_k|^{\nu_k},
\end{aligned}
\end{equation}
where $\beta_k = \gamma_{0k}\sigma_{0k}^2/ \alpha_k$. Note that $\sigma_{0k}^2$ may also include interference power as a part of it. We further note that problem (P1) and hence (P2) can also accommodate any rate model that is a monotonically-increasing function of the SNR $R_k(\gamma_k)$, not only that in \eqref{eq.Rk}, so that the condition $R_k \ge R_{0k}$ is equivalent to $\gamma_k \ge \gamma_{0k}$ with properly-selected $\gamma_{0k}= R^{-1}_k(R_{0k})$. This generalized model can also include fading, where $R_k$ and $\gamma_k$ are interpreted as the average (ergodic) rate and SNR respectively. It should be emphasized that the problem formulation (P2) is based on power/energy minimization, unlike some other  formulations in the literature (e.g. \cite{Sherali1996}\cite{Drezner2002}) where the objective (cost) function is introduced in an ad-hoc way. The restriction to a single BS is necessary to make the problem analytically tractable (which seems to be out of reach otherwise). However, minimizing the BS power in one cell as in \eqref{eq.P1}, \eqref{eq.P2} will also reduce the amount of inter-cell interference it generates to other cells under frequency re-use.

\section{Optimal BS Location and Power Allocation}
\label{sec.Optimal BS Location}

To the best of our knowledge, no analytical solution is available in the literature to either (P1) or (P2) in the general case (even though the setting is limited to a single BS). Therefore, we present  next a general characterization of an optimal BS location according to (P2) (see Appendix for a proof), from which a number of closed-form solutions follow.

\begin{thm}
\label{thm.c*}
An optimal BS location $\bc^*$ for (P2) in \eqref{eq.P2} can be expressed as a convex combination of user locations $\{\bx_k\}$:
\begin{equation}
\label{eq.c*}
\bc^* = \sum_k \theta_k \bx_k,\ \theta_k = \frac{\beta_{k}\nu_k|\bc^* - \bx_k|^{\nu_k - 2}}{\sum_k \beta_{k}\nu_k|\bc^* - \bx_k|^{\nu_k -2}}
\end{equation}
if either (i) $\nu_k \ge 2$ or/and (ii) $\bc^* \neq \bx_k$ and $\nu_k \ge 1$. Transmission with the least per-user power is optimal: $P_{k}^*  = \beta_k |\bc^* - \bx_k|^{\nu_k}$.
\end{thm}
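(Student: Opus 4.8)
The plan is to exploit the convexity of (P2) so that the KKT conditions are both necessary and sufficient for global optimality, and then to read off both claims directly from these conditions. First I would verify convexity: for $\nu_k \ge 1$ the map $\bc \mapsto |\bc - \bx_k|^{\nu_k}$ is convex, being the composition of the convex, nondecreasing function $t \mapsto t^{\nu_k}$ (on $t \ge 0$) with the convex norm $|\bc - \bx_k|$; hence each constraint $P_k \ge \beta_k|\bc - \bx_k|^{\nu_k}$ defines a convex feasible region, and since the objective $\sum_k P_k$ is linear, (P2) is a convex program in $(\{P_k\},\bc)$. Slater's condition holds trivially (any $\bc$ with strictly larger $P_k$ is strictly feasible), so the KKT conditions characterize the global optimum.

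Next I would form the Lagrangian $L = \sum_k P_k + \sum_k \lambda_k(\beta_k|\bc - \bx_k|^{\nu_k} - P_k)$ with multipliers $\lambda_k \ge 0$. Stationarity in $P_k$ gives $1 - \lambda_k = 0$, so $\lambda_k = 1$ for every $k$. Since each multiplier is strictly positive, complementary slackness forces every constraint to be active, i.e. $P_k^* = \beta_k|\bc^* - \bx_k|^{\nu_k}$, which is precisely the least-per-user-power claim. Substituting $\lambda_k = 1$ into the stationarity condition in $\bc$ yields $\sum_k \beta_k \nabla_\bc|\bc^* - \bx_k|^{\nu_k} = 0$.

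The remaining work is to evaluate this gradient. Writing $d_k = |\bc - \bx_k|$, a direct computation gives $\nabla_\bc|\bc - \bx_k|^{\nu_k} = \nu_k d_k^{\nu_k - 2}(\bc - \bx_k)$ wherever the term is differentiable. Setting the sum to zero and solving for $\bc^*$ produces $\bc^*\sum_k \beta_k\nu_k d_k^{\nu_k - 2} = \sum_k \beta_k\nu_k d_k^{\nu_k - 2}\bx_k$, which rearranges to $\bc^* = \sum_k \theta_k\bx_k$ with the stated weights $\theta_k$. Because each $\theta_k \ge 0$ and $\sum_k\theta_k = 1$, this exhibits $\bc^*$ as a convex combination of the user locations.

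The main obstacle is differentiability of $|\bc - \bx_k|^{\nu_k}$ at the point $\bc = \bx_k$, where the radial term can lose smoothness. I would argue that for $\nu_k > 1$ the factor $d_k^{\nu_k - 1}$ vanishes as $d_k \to 0$, so the gradient extends continuously (to zero) at $\bx_k$; for $\nu_k = 2$ the function is a smooth quadratic; but for $\nu_k = 1$ (the Euclidean-norm case) a genuine kink remains at $\bx_k$ and the gradient is undefined there. This is exactly why the characterization is asserted under the hypotheses (i) $\nu_k \ge 2$, or (ii) $\bc^* \ne \bx_k$ with $\nu_k \ge 1$: these conditions guarantee that every term is differentiable at $\bc^*$, so the stationarity equation holds classically and the convex-combination formula is valid. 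The excluded $\nu_k = 1$, $\bc^* = \bx_k$ case would instead require a subgradient (optimality-via-subdifferential) argument.
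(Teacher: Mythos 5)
Your proposal is correct and follows essentially the same route as the paper's Appendix: convexity plus Slater's condition to make the KKT conditions sufficient, the same Lagrangian, stationarity in $P_k$ giving $\lambda_k=1$ and hence active power constraints, and stationarity in $\bc$ with the gradient $\nu_k(\bc-\bx_k)|\bc-\bx_k|^{\nu_k-2}$ rearranged into the convex-combination formula. The only difference is in the singular case $\bc^*=\bx_k$: you argue directly that the gradient extends continuously to zero when $\nu_k\ge 2$, whereas the paper defers this case to a regularized version of (P2) in \cite{Kalantari-20}; both resolutions are legitimate, and your identification of $\nu_k=1$ with $\bc^*=\bx_k$ as the genuinely non-differentiable case correctly explains the hypotheses (i) and (ii).
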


Next, we explore some properties of an optimal BS location.

\begin{prop}
\label{prop.unique.c*}
When $\nu_k > 1$ for some $k$, an optimal BS location is unique. This is not necessarily the case if $\nu_k=1$ for all $k$.
\begin{proof}
Observe that
(P2) is equivalent to $\min _{\bc} \sum_k \beta_k |\bc - \bx_k|^{\nu_k}$,
since transmitting with the least per-user power is optimal, and that the objective here is strictly convex if $\nu_k >1$ for some $k$, so that the solution is unique \cite{Boyd2004}. Non-uniqueness for $\nu_k =1$ can be shown via examples, see Proposition \ref{prop.nu_1}.
\end{proof}
\end{prop}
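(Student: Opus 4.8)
The plan is to first collapse the joint minimization in (P2) over $\{P_k\}$ and $\bc$ into a single minimization over $\bc$. Theorem \ref{thm.c*} already establishes that transmission with the least per-user power is optimal, so for any fixed $\bc$ the best feasible allocation is $P_k = \beta_k|\bc-\bx_k|^{\nu_k}$. Substituting this back, (P2) becomes the unconstrained problem $\min_{\bc} f(\bc)$ with $f(\bc)=\sum_k \beta_k|\bc-\bx_k|^{\nu_k}$. Uniqueness of the minimizer will then follow from strict convexity of $f$, while existence is immediate since $f$ is continuous and coercive ($f(\bc)\to\infty$ as $|\bc|\to\infty$).

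The core of the argument is to show that $f$ is \emph{strictly} convex whenever $\nu_k>1$ for at least one index, say $k=j$. Each summand $\beta_k|\bc-\bx_k|^{\nu_k}$ is convex for $\nu_k\ge 1$, being the composition of the convex nondecreasing map $t\mapsto t^{\nu_k}$ (on $t\ge 0$) with the convex norm $|\bc-\bx_k|$; hence $f$ is convex. Since a sum of convex functions is strictly convex as soon as one fixed summand is strictly convex along every line segment, it suffices to prove that the single term $g(\bc)=|\bc-\bx_j|^{\nu_j}$ with $\nu_j>1$ has this property. I would verify it by restricting $g$ to an arbitrary segment $\bc(t)=\bc_0+t\boldsymbol{d}$ and examining the scalar function $t\mapsto g(\bc(t))$.

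The main obstacle is that the naive ``convex-of-convex'' reasoning does not by itself yield strict convexity: the Euclidean norm $|\bc-\bx_j|$ is \emph{not} strictly convex along lines through $\bx_j$, where it degenerates to the piecewise-linear map $|t-t_0|\,|\boldsymbol{d}|$. I would therefore split into two cases. If the supporting line of the segment misses $\bx_j$, then $|\bc(t)-\bx_j|$ is strictly convex and strictly positive, and composing with the increasing map $t\mapsto t^{\nu_j}$ preserves the strict inequality. If the line passes through $\bx_j$, then $g(\bc(t))=|\boldsymbol{d}|^{\nu_j}\,|t-t_0|^{\nu_j}$, and here strict convexity is rescued precisely by $\nu_j>1$, since $s\mapsto|s|^{\nu_j}$ is strictly convex for $\nu_j>1$ (whereas for $\nu_j=1$ it is merely linear on each side, the exact source of the failure below). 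In both cases $g$ is strictly convex on the segment, so $f$ is strictly convex and its minimizer $\bc^*$ is unique.

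Finally, for the claim that uniqueness can fail when $\nu_k=1$ for all $k$, I would exhibit a counterexample rather than argue in general. The simplest is two users of equal weight, where $f(\bc)=\beta\big(|\bc-\bx_1|+|\bc-\bx_2|\big)$ is the sum of distances to two points; this attains its minimum value $|\bx_1-\bx_2|$ at \emph{every} point of the segment $[\bx_1,\bx_2]$, so the optimal location is not unique. I would defer the explicit construction to Proposition \ref{prop.nu_1}.
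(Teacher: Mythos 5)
Your proposal is correct and follows essentially the same route as the paper: reduce (P2) to the unconstrained problem $\min_{\bc}\sum_k\beta_k|\bc-\bx_k|^{\nu_k}$ via optimality of least per-user power, invoke strict convexity when some $\nu_k>1$, and defer the $\nu_k=1$ counterexample to Proposition \ref{prop.nu_1}. The only difference is that you explicitly verify the strict convexity of $|\bc-\bx_j|^{\nu_j}$ for $\nu_j>1$ (correctly handling the degenerate lines through $\bx_j$, where the norm itself fails to be strictly convex), a step the paper dispatches with a citation.
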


To obtain some insights, we need the following definition \cite{Boyd2004}, from which Corollary \ref{cor.conv} follows.

\begin{defn}
\label{defn.conv}
Let $\{\by_k\}$ be a set of points. Its convex hull $\cnv\{\by_k\}$ is the set of all convex combinations of the points in $\{\by_k\}$:
\begin{equation}
\cnv\{\by_k\} = \left\{\sum_k q_k\by_k: \ q_k\ge 0,\ \sum_k q_k=1\right\}
\end{equation}
\end{defn}

\begin{cor}
\label{cor.conv}
The optimal BS location $\bc^*$ in \eqref{eq.c*} is in the convex hull of all user locations:
\bal
\bc^* \in \cnv\{\bx_k\}
\eal
\end{cor}
\begin{proof}
Notice from \eqref{eq.c*} that $0 \le \theta_k  \le 1,\ \sum_k \theta_k =1$, and then apply Definition \ref{defn.conv}.
\end{proof}

The above Corollary implies that the search of $\bc^*$ can always be confined to $\cnv\{\bx_k\}$, without loss of optimality. For example, if all users are located on a line or in a convex building, the optimal BS is also on this line or in this building.
We obtain below a number of explicit closed-form solutions for $\bc^*$ in some special cases.

\subsection{Free-space propagation}

The first important special case is that of free-space propagation, where $\nu_k=2$. In practice, $\nu_k$ is close to 2 when propagation is close to free space, i.e. most of the 1st Fresnel zone is free of obstructions \cite{Rappaport2001}. This is also the case in a multipath channel when multipath components are much weaker than LoS; therefore, LoS dominates and the propagation becomes almost the same as in free space. $\nu_k$ is close to 2 in many indoor environments when LoS is present \cite{Rappaport2001} and $\nu_k=2$ appears often in the 3GPP LTE propagation models. Using Theorem \ref{thm.c*}, $\bc^*$ can be expressed as follows in this case.

\begin{cor}
\label{cor.nu2}
If $\nu_k=2$ for all $k$, the optimal BS location $\bc^*$ is a weighted mean of the user locations:
\begin{equation}
\label{eq.c.nu2}
\bc^*  = \sum_k \theta_k \bx_k, \theta_k = \frac{\beta_{k}}{\sum_i \beta_{i}}.
\end{equation}
\end{cor}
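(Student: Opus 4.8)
The plan is to specialize the general characterization in Theorem~\ref{thm.c*} to the case $\nu_k = 2$ for all $k$. This is the easiest of all the corollaries because the exponent $\nu_k - 2$ that appears in the convex weights $\theta_k$ of \eqref{eq.c*} vanishes, which removes all dependence of the weights on the (a priori unknown) optimal location $\bc^*$ itself. First I would invoke Theorem~\ref{thm.c*}, noting that with $\nu_k = 2 \ge 2$, condition (i) holds for every $k$, so the characterization \eqref{eq.c*} applies without any further restriction on whether $\bc^* = \bx_k$. Then I would substitute $\nu_k = 2$ directly into the expression for $\theta_k$.

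The key observation is that $|\bc^* - \bx_k|^{\nu_k - 2} = |\bc^* - \bx_k|^0 = 1$ for each $k$ (this is unambiguous precisely because the exponent is zero, so even the term with $\bc^* = \bx_k$ gives $1$ rather than an indeterminate $0^0$ situation). Substituting this into \eqref{eq.c*} gives
\begin{equation}
\theta_k = \frac{\beta_k \nu_k \cdot 1}{\sum_i \beta_i \nu_i \cdot 1} = \frac{2\beta_k}{\sum_i 2\beta_i} = \frac{\beta_k}{\sum_i \beta_i},
\end{equation}
where the common factor $\nu_k = 2$ cancels between numerator and denominator. This yields exactly \eqref{eq.c.nu2}, and the result that $\bc^*$ is the weighted mean of the user locations follows at once.

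This corollary is essentially an immediate specialization, so there is no real obstacle to overcome; the only point deserving a moment of care is confirming that the weights $\theta_k$ become genuinely independent of $\bc^*$. In the general statement the $\theta_k$ depend on $\bc^*$ through the factor $|\bc^* - \bx_k|^{\nu_k - 2}$, so \eqref{eq.c*} is an implicit (fixed-point) characterization rather than an explicit formula; what makes $\nu_k = 2$ special is that this implicit dependence disappears entirely, turning \eqref{eq.c*} into a genuine closed-form expression. I would emphasize this point briefly, since it is the whole reason the free-space case admits an explicit solution while the general case does not.
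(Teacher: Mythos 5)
Your proof is correct and takes exactly the same route as the paper, which simply says ``Use \eqref{eq.c*} with $\nu_k=2$''; your additional remarks about the exponent vanishing and the weights becoming independent of $\bc^*$ are accurate elaborations of that one-line argument.
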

\begin{proof}
Use \eqref{eq.c*} with $\nu_k=2$.
\end{proof}
Note that \eqref{eq.c.nu2} is an explicit closed-form solution, since $\theta_k$ are now independent of $\bc^*$. It follows that users with larger $\beta_k$, i.e.  those requiring higher rates, contribute more to $\bc^*$ so that as $\beta_k$ increases, $\bc^*$ moves closer to $\bx_k$. In the limiting case of $\beta_1 > 0,\ \beta_i=0, i \neq 1$, the optimal location $\bc^*=\bx_1$.

Further simplification is possible when all users require the same rate and have the same system settings, so that $\beta_k = \beta\ \forall k$. In this case, \eqref{eq.c.nu2} reduces to the mean value of the users' locations - a result well-known in the facility location literature under the Euclidean norm squared criterion \cite{Boyd2004}. Our novel contribution here is that this BS location minimizes its total transmit power under free-space propagation and identical system parameters of the users (but not otherwise in general).

\subsection{Large pathloss exponent}

To obtain further insights, we consider the limiting case of large pathloss exponent $\nu_k \to\infty$, which serves as an approximation to large but finite $\nu_k$. To simplify the discussion, we further assume that all users have identical parameters so that $\beta_k =\beta\ \forall k$.

\begin{prop}
\label{prop.nu.inf}
If $\nu_k \to\infty$, the optimal BS location is the mean of most distant user locations.
\end{prop}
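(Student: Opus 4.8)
The plan is to exploit the equivalence, already noted in the proof of Proposition \ref{prop.unique.c*}, between (P2) and the unconstrained problem $\min_{\bc}\sum_k \beta_k|\bc-\bx_k|^{\nu_k}$. With identical parameters $\beta_k=\beta$ and a common exponent $\nu_k=\nu$, this reduces to minimizing $\sum_k|\bc-\bx_k|^{\nu}$, or equivalently, taking the $\nu$-th root (a monotone transformation), minimizing $\big(\sum_k|\bc-\bx_k|^{\nu}\big)^{1/\nu}$, which is the $\ell^{\nu}$ norm of the vector of distances $d_k(\bc)=|\bc-\bx_k|$. The key observation is that, since $\max_k d_k(\bc)\le\big(\sum_k d_k(\bc)^{\nu}\big)^{1/\nu}\le N^{1/\nu}\max_k d_k(\bc)$ and $N^{1/\nu}\to1$, this objective converges uniformly (on any compact set) to $\max_k d_k(\bc)$ as $\nu\to\infty$. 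Hence the minimizer $\bc^*(\nu)$ converges to the minimizer of $\max_k|\bc-\bx_k|$, which is unique and is the center of the smallest ball enclosing $\{\bx_k\}$; the ``most distant users'' are then precisely those lying on this bounding sphere.

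It remains to pin down $\bc^*$ as a specific combination of these most distant users, and here I would pass to the limit in the characterization of Theorem \ref{thm.c*}. With $\beta_k=\beta$ and $\nu_k=\nu$ the weights become $\theta_k=\dfrac{(d_k/d_{\max})^{\nu-2}}{\sum_i (d_i/d_{\max})^{\nu-2}}$, where $d_{\max}=\max_i d_i$. For any user with $d_k<d_{\max}$ the ratio $d_k/d_{\max}$ stays bounded away from $1$, so $(d_k/d_{\max})^{\nu-2}\to0$ and $\theta_k\to0$; thus all weight concentrates on the most distant users, and by Corollary \ref{cor.conv} the limit remains in their convex hull. This already establishes that $\bc^*$ is a convex combination of the most distant user locations.

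The main obstacle is to upgrade this to the \emph{equal}-weight statement, i.e. to show the combination is the plain mean. This step is delicate because $\bc^*(\nu)$ itself moves with $\nu$: among the most distant users the ratios $d_k/d_{\max}$ all tend to $1$, so a limit of $\theta_k$ exists only through the competition of these ratios raised to the power $\nu-2$, that is, through the rates at which the corresponding distances approach $d_{\max}$. I would handle this by arguing that these rates must coincide in the limit, either by invoking the symmetry of the extremal configuration or, in the case where exactly two users attain the maximum distance, by the elementary observation that the optimum tends to their midpoint, which then coincides with the mean. Establishing this matching of rates, and hence the uniformity of the limiting weights that yields the mean, is the crux; by comparison the reduction to the enclosing-ball problem and the concentration of the weights onto the most distant users are routine.
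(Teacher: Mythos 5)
Your first two steps are correct and are, in effect, the rigorous version of the paper's one-line proof (``use \eqref{eq.c*} and take the limit''): the sandwich $\max_k d_k\le(\sum_k d_k^{\nu})^{1/\nu}\le N^{1/\nu}\max_k d_k$ gives uniform convergence of the (root of the) objective to $\max_k|\bc-\bx_k|$, so $\bc^*(\nu)$ converges to the Chebyshev center, i.e.\ the center of the smallest enclosing sphere --- exactly the problem \eqref{cvx-infinity} that the paper writes down immediately after the proposition --- and your argument that $\theta_k\to 0$ for every non-extremal user correctly shows the limiting weights concentrate on the users attaining the maximum distance.

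The step you flag as the crux, upgrading ``convex combination of the most distant users'' to ``their unweighted mean,'' cannot be completed, because that stronger claim is false in general; your instinct that the rates at which the extremal ratios $d_k/d_{\max}$ approach $1$ need not match is exactly right. Take three users at $\bx_1=(1,0)$, $\bx_2=(0,1)$, $\bx_3=(-1,0)$. The smallest enclosing circle is centered at the origin with radius $1$ and all three users lie on it, so all three are ``most distant''; yet their mean is $(0,1/3)$, while $\bc^*(\nu)\to(0,0)$ (any point with positive second coordinate has maximal distance exceeding $1$ to $\bx_3$ or $\bx_1$, so its objective eventually exceeds the value $3$ attained at the origin). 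The correct limiting statement is the one your first two steps actually prove: $\bc^*(\nu)$ tends to the center of the smallest enclosing sphere, which is a convex combination of the boundary (most distant) users with weights that are the dual variables of \eqref{cvx-infinity} and need not be equal. The ``mean'' phrasing is accurate only in special cases, e.g.\ when exactly two users attain the maximum (your midpoint observation, which covers the collinear illustration in Fig.~3) or when the extremal configuration is symmetric about the center. So rather than trying to close the gap, you should restate the proposition in the Chebyshev-center form; your proof then stands as written, and is more careful than the paper's.
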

\begin{proof}
Use \eqref{eq.c*} and take the limit $\nu_k \to \infty$.
\end{proof}

Hence, for large pathloss exponent, it is the most distant users who determine the optimal BS location, while nearby users contribute little. Finding most distant users in a set can be expressed as a geometric (and convex) problem of finding the smallest enclosing sphere where optimization variables are the sphere center $\bc$ and its radius $r$:
\begin{equation}
\begin{aligned}
& \underset{r,\bc}{\text{min}}\ r\ \text{s.t.} & |\bc - \bx_k| \le r\ \forall k.
\end{aligned}
\label{cvx-infinity}
\end{equation}

\subsection{Symmetric sets of users}

To obtain closed-form solutions for $\bc^*$ beyond those above, we consider now the scenarios where user location sets possesses some symmetry properties. This should also approximate (due to the continuity of the problem in user locations) the scenarios where users are nearly-symmetric. We will need the following definitions of symmetric sets.

\begin{defn}
    Let $\Omega_l = \{ \bx_k: k \in I_l \}$ be a set of $|I_l|$ points (users), where $I_l$ is an index set and $|I_l|$ is its cardinality. The set $\Omega_l$ is called elementary symmetric if the distance between its center $\ba_l = |I_l|^{-1} \sum _{k \in I_l} \bx_{k}$ and any of its points is the same, i.e. $|\ba_l - \bx_k | = d_l\ \forall k \in I_l$.
\end{defn}

\begin{defn}
    Set $\Omega$ is symmetric if it is a union of disjoint elementary symmetric sets with the same centers, i.e. $\Omega = \cup_l \Omega_{l}$ and $\ba_l = \ba\ \forall l$.
\end{defn}

While an elementary symmetric set is also symmetric, the converse is not true in general, i.e. a symmetric set does not need to be elementary symmetric, as Fig. \ref{symmetric} illustrates, so the former is more general than the latter. Equipped with these notions of symmetry, we are now able to obtain the optimal BS location in a closed form.

\begin{figure}[t]
    \begin{center}
        \includegraphics[width=2.2in]{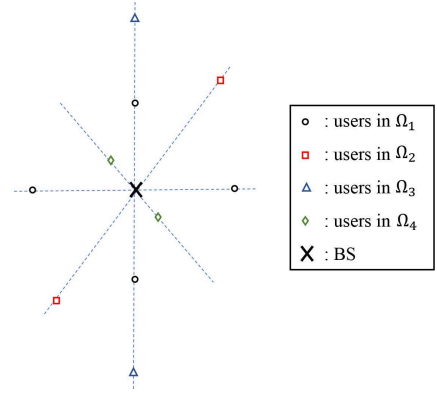}\\
    \end{center}
    \caption{The union of 4 elementary symmetric sets $\Omega_1 .. \Omega_4$ with the same center is symmetric; the optimal BS location, for any pathloss exponent $\nu$, is its (common) center.}
    \label{symmetric}
\end{figure}

\begin{prop}
    \label{prop.symmetric}
     Let the set $\Omega$ of user locations be symmetric, i.e. $\Omega = \cup_l \Omega_{l}$, where $\Omega_{l}$ are disjoint and elementary-symmetric, $\nu_k=\nu_l$ for any $k\in I_l$, and $\beta_k = \beta\ \forall k$. Then, for any pathloss exponents $\nu_k > 1$ for all $k$, the optimal BS location is its center $\ba$, i.e. the mean of the users' locations,
     \begin{equation}
    \label{eq.c*.sym}
    \bc^* = \ba = \overline{\bx} =\frac{1}{N}\sum_k \bx_k.
\end{equation}
\end{prop}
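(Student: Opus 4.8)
The plan is to combine the uniqueness of the optimum with the stationarity characterization already available. By Proposition~\ref{prop.unique.c*}, since $\nu_k>1$ for all $k$ the objective $f(\bc)=\beta\sum_k|\bc-\bx_k|^{\nu_k}$ of the equivalent unconstrained problem $\min_{\bc} f(\bc)$ is strictly convex and therefore has a single global minimizer. It thus suffices to exhibit one point that satisfies the optimality condition of Theorem~\ref{thm.c*} and to show that this point is $\ba$; uniqueness then forces $\bc^*=\ba$. The whole argument reduces to verifying that the common center $\ba$ is stationary, i.e. that substituting $\bc^*=\ba$ into \eqref{eq.c*} is self-consistent, equivalently $\nabla f(\ba)=0$. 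I would first record that $\ba=\overline{\bx}$: since $\sum_{k\in I_l}\bx_k=|I_l|\ba_l=|I_l|\ba$ for every $l$, summing over $l$ gives $\sum_k\bx_k=N\ba$, so the common center coincides with the overall mean claimed in \eqref{eq.c*.sym}.

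The key step uses the two structural features of a symmetric set. I would group the defining sum by elementary-symmetric blocks. For every $k\in I_l$ the common-center property $\ba_l=\ba$ gives $|\ba-\bx_k|=|\ba_l-\bx_k|=d_l$, independent of $k$ within the block, and $\nu_k=\nu_l$ there by hypothesis. Hence the weights in \eqref{eq.c*} evaluated at $\bc^*=\ba$ collapse to a single value $w_l:=\beta\nu_l d_l^{\nu_l-2}$ on each block. Writing $D=\sum_l|I_l|w_l$ for the denominator, the candidate convex combination is
\bal
\sum_k\theta_k\bx_k=\frac{1}{D}\sum_l w_l\sum_{k\in I_l}\bx_k=\frac{1}{D}\sum_l w_l|I_l|\ba_l=\frac{\ba}{D}\sum_l|I_l|w_l=\ba,\nonumber
\eal
using $\ba_l=\ba$ and the centroid identity $\sum_{k\in I_l}\bx_k=|I_l|\ba_l$. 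Equivalently, in gradient form, $\nabla f(\ba)=\beta\sum_l\nu_l d_l^{\nu_l-2}\sum_{k\in I_l}(\ba-\bx_k)=0$, since each inner sum vanishes by the definition of $\ba_l=\ba$. Either way $\ba$ is stationary, and by strict convexity it is the unique optimum.

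The only place that needs care --- and the main obstacle --- is the applicability of the optimality condition when $\ba$ coincides with some user location or when a block is degenerate ($d_l=0$). For $\nu_k>1$ the map $\bc\mapsto|\bc-\bx_k|^{\nu_k}$ is continuously differentiable everywhere, its gradient $\nu_k|\bc-\bx_k|^{\nu_k-2}(\bc-\bx_k)$ extending continuously to $0$ at $\bc=\bx_k$; hence $f$ is $C^1$ on the whole space and $\nabla f(\ba)=0$ holds with no exclusions. This is precisely why $\nu_k>1$ is required rather than $\nu_k\ge1$. A degenerate block with $d_l=0$ contributes $\ba-\bx_k=0$ termwise and drops out harmlessly. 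For this reason I would present the argument through the gradient identity $\nabla f(\ba)=0$, which sidesteps the case split $(i)/(ii)$ of Theorem~\ref{thm.c*}, and treat the fixed-point form only as an equivalent restatement.
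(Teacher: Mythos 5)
Your proof is correct and follows essentially the same route the paper indicates: substitute $\bc^*=\ba$ into the characterization \eqref{eq.c*}, use the block-wise symmetry to collapse the weights, and invoke strict convexity for uniqueness (the paper only sketches this and defers the manipulations to \cite{Kalantari-20}). Your explicit $C^1$ gradient argument handling the case $\ba=\bx_k$ is a careful filling-in of a detail the paper glosses over, but it is not a different approach.
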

\begin{proof}
Using \eqref{eq.c*} and exploiting the symmetry properties, along with the convexity of the objective functions, results in \eqref{eq.c*.sym} after some manipulations, see \cite{Kalantari-20}.
\end{proof}

It should be emphasized that this result holds for any $\nu_k > 1$, not just for $\nu_k=2$, as in Corollary \ref{cor.nu2} (with $\beta_k=\beta$), so this result is more general in terms of $\nu_k$ but more restrictive in terms of user locations as symmetry is required here, unlike Corollary \ref{cor.nu2}. Note also that, unlike the general case,  the optimal BS location is independent of pathloss exponent $\nu_k$ as long as the user set is symmetric. This Proposition also implies that when new users are added to existing ones, the optimal BS location is not affected as long as new users do not disturb symmetry. It can be further shown that the BS location in \eqref{eq.c*.sym} also minimizes the amount of co-channel interference to the users of other cells provided they satisfy certain symmetry requirement.

\subsection{Collinear users}
\label{sec.collinear}

Let us consider the case where all users are located on a line. This is motivated by practical settings on highways, in tunnels, street canyons or corridors. Following  Corollary \ref{cor.conv}, an optimal BS location is also on the line, while its specific location depends on users' locations and pathloss exponent. We consider below the case of $\nu_k=1$ for all $k$ and demonstrate some unusual properties such as non-uniqueness of optimal BS location. Note that $\nu <2$ represents an environment more favorable for propagation than free space and it is possible in channels with guided wave structure, such as tunnels, corridors, street canyons \cite{Rappaport2001}.

\begin{prop}
\label{prop.nu_1}
Let all users to have the same system parameters, $\nu_k=1$, $\beta_k=\beta\ \forall k$, and be located on a line as represented by their scalar coordinates $x_k$, $k=1...N$; without loss of generality, set $x_1 \le x_2 \le \ldots \le x_N$. If $\nu_k=1$, an optimal BS location is a median of users' locations:
\begin{equation}
c^* = \begin{cases}
x_{(N+1)/2},\ N\ \mathrm{is\ odd}, \\
\mathrm{any}\ a \in [x_{N/2}, x_{N/2+1}],\ N\ \mathrm{is\ even}.
\end{cases}
\end{equation}
\end{prop}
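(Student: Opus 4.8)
The plan is to reduce the problem to the classical one-dimensional analysis of a convex piecewise-linear objective. As noted in the proof of Proposition \ref{prop.unique.c*}, transmitting with the least per-user power is optimal, so (P2) is equivalent to minimizing $\sum_k \beta_k |\bc - \bx_k|^{\nu_k}$. Specializing to the collinear case with scalar coordinates, $\nu_k = 1$ and $\beta_k = \beta$, and dropping the positive constant $\beta$, this becomes
\begin{equation}
\min_{c \in \mathbb{R}} f(c), \quad f(c) = \sum_{k=1}^N |c - x_k|.
\end{equation}
The function $f$ is convex (a sum of convex functions) and piecewise-linear, with breakpoints at the $x_k$. Its set of minimizers therefore forms a (possibly degenerate) closed interval, which I would pin down by examining where the slope of $f$ changes sign.

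First I would compute the slope of $f$ on an interval between consecutive distinct data points. For $c$ lying strictly between $x_m$ and $x_{m+1}$ (so that $m$ of the points lie to its left and $N-m$ to its right), $f$ is differentiable with $f'(c) = \sum_k \mathrm{sign}(c - x_k) = 2m - N$. Hence the slope is negative for $m < N/2$, positive for $m > N/2$, and zero precisely when $m = N/2$. Reading off where this slope switches from negative to positive then locates the minimum.

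For the odd case, writing $N = 2p+1$, the slope equals $-1$ or less on every interval with $m \le p$ and equals $+1$ or more on every interval with $m \ge p+1$; thus $f$ strictly decreases up to $x_{p+1}$ and strictly increases afterward, so the unique minimizer is $c^* = x_{p+1} = x_{(N+1)/2}$. For the even case, $N = 2p$, the slope vanishes on the entire interval $(x_p, x_{p+1}) = (x_{N/2}, x_{N/2+1})$ while remaining negative before and positive after, so $f$ is constant on $[x_{N/2}, x_{N/2+1}]$ and any point of this interval is a minimizer. This yields the stated median characterization, and simultaneously exhibits the non-uniqueness claimed (for even $N$, whenever the two central points are distinct).

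The step requiring the most care is the treatment of non-differentiability at the data points themselves, where $f'$ jumps. Here I would argue via the subdifferential $\partial f(c) = \sum_k \partial |c - x_k|$ rather than the ordinary derivative, using the fact that $c$ is a global minimizer of the convex $f$ iff $0 \in \partial f(c)$, and checking this condition at the candidate points. This subtlety also explains why Theorem \ref{thm.c*} cannot be invoked directly in the odd case: its convex-weight formula \eqref{eq.c*} involves $|\bc^* - \bx_k|^{\nu_k - 2} = |\bc^* - \bx_k|^{-1}$, which is singular exactly when $\bc^* = \bx_k$, as happens when the median coincides with a user location.
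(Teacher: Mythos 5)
Your proof is correct and complete. Note, however, that the paper itself offers no proof of this proposition at all: it simply states the result and remarks that it is ``known in the facility location literature (under $L_1$ norm cost)'', its claimed novelty being only the interpretation as a power-minimizing BS location. So there is nothing in the paper to compare against step by step; what you have supplied is the standard self-contained argument that the paper delegates to the literature. Your route --- reduce to $\min_c \sum_k |c - x_k|$ via the least-per-user-power observation, compute the slope $2m-N$ on each open interval between consecutive order statistics, and handle the breakpoints via the optimality condition $0 \in \partial f(c)$ --- is exactly the right one, and it correctly delivers both branches (unique median for odd $N$, the whole interval $[x_{N/2}, x_{N/2+1}]$ for even $N$). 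Your closing observation about why Theorem \ref{thm.c*} cannot be invoked directly here (the weight formula involves $|\bc^*-\bx_k|^{-1}$, which is singular precisely when the optimum sits on a user, as in the odd case) is a genuine added insight that the paper only hints at through the hypothesis ``$\bc^* \neq \bx_k$ and $\nu_k \ge 1$'' in Theorem \ref{thm.c*}. The only cosmetic caveat is that when some $x_k$ coincide, certain ``intervals'' $(x_m, x_{m+1})$ are empty, but your convexity and subdifferential framing already absorbs that degeneracy.
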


\begin{figure}[t]
    \begin{center}
        \includegraphics[width=2.8in]{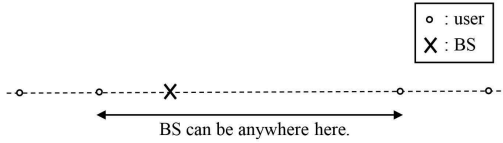}\\
    \end{center}
    \caption{If $\nu_k=1$ and the number of users is even, an optimal BS location is not unique: it can be anywhere between two middle-point users.}
    \label{nu=1fig}
\end{figure}

While this result is known in the facility location literature (under $L_1$ norm cost), our novel contribution here is that this BS location also minimizes its total transmit power under certain system and propagation settings (but not in general).

An illustration of Proposition \ref{prop.nu_1} is given in Fig. \ref{nu=1fig} when the number of users is even. Note that an optimal BS location is not unique in this case, which is ultimately due to the fact that $|x|$ is not strictly convex. However, if $\nu>1$, then it is always unique, according to Proposition \ref{prop.unique.c*}, since $|x|^{\nu}$ is strictly convex in this case. To see the impact of $\nu$, let us consider 3 special cases as shown in Fig. \ref{fig.collinear}:

1. For $\nu=1$, an optimal BS location is a median point (not unique -- can be anywhere between users 3 and 4).

2. For free-space propagation, $\nu=2$, the optimal BS location is the (unique) mean of the users' locations, according to Corollary \ref{cor.nu2}.

3. For asymptotically-large $\nu$, the optimal BS location is the mean of the most distant users' locations, according to Proposition  \ref{prop.nu.inf}, so that most distant users contribute most to optimal BS location in this case.

Thus, $\nu$ has a profound impact on optimal BS location for asymmetric user sets. This is in stark contrast with symmetric user sets (Proposition \ref{prop.symmetric}), where the optimal BS location is independent of $\nu$.

\begin{figure}[t]
    \centering
    \subfigure[$\nu=1$.]{
        \label{nu=1a}
        \includegraphics[width=2.7in]{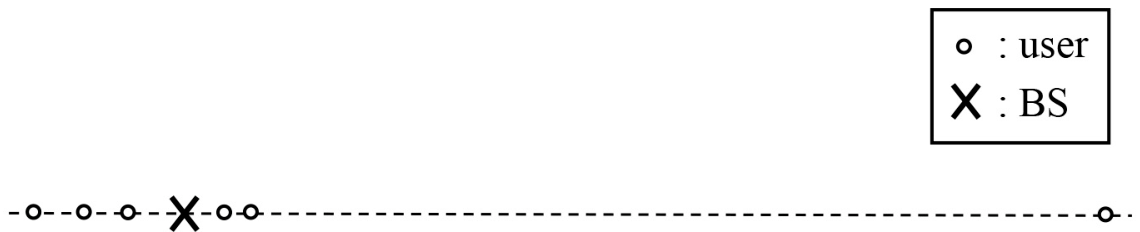}
    }
    \subfigure[$\nu=2$.]{
        \label{nu=two}
        \includegraphics[width=2.7in]{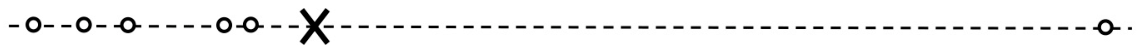}
    }
    \subfigure[$\nu\to\infty$.]{
        \label{nu=infinity}
        \includegraphics[width=2.7in]{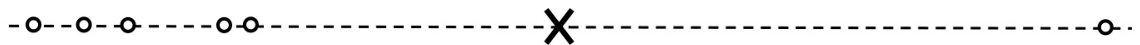}
    }
    \caption {Optimum BS locations for different pathloss exponents. For $\nu=1$, it is a median point, which is not unique (anywhere between users 3 and 4); for $\nu=2$ - the mean of the user locations; for $\nu \to \infty$ - the mean of the most distant users. As $\nu$ increases, the impact of the distant user on the right increases too.}
    \label{fig.collinear}
\end{figure}

\subsection{Elevated BS}
\label{sec.elevated}

In practice, BS is often located at some elevation above ground to provide clear LoS to most users hence improving coverage. This also includes scenarios with an airborne communication node (e.g. a drone).  To model this scenario, we consider a setting where all users are located on a (ground) plane with 2-D vector $\bx_k$ representing user $k$, while the BS is above the ground at a given height $h$ and $\bc$ is its 2-D location (projection) on the ground plane. The distance between the BS and user $k$ is therefore $\sqrt{|\bc-\bx_k|^2+h^2} = |\bc-\bx_k|_h$. Thus, the problem (P2) becomes
\begin{equation}
\label{eq.BS.elev}
\begin{aligned}
\underset{\{P_{k}\}, \bc}{\text{min}}\ & \sum_{k=1}^N P_{k} & \text{s.t.}\ P_{k} \ge  \beta_k |\bc-\bx_k|_h^{\nu_k}.
\end{aligned}
\end{equation}

The following Theorem characterizes its solutions.

\begin{thm}
\label{thm.c*.h}
Consider the elevated BS location problem in \eqref{eq.BS.elev} when $\nu_k\ge 1$. Its solution $\bc^*$ can be expressed as a convex combination of user locations $\{\bx_k\}$:
    \begin{equation}
    \label{eq.c*.h}
    \bc^* = \sum_k \theta_k \bx_k,\ \theta_k = \frac{\beta_{k}\nu_k|\bc^* -\bx_k|_h^{\nu_k-2}}{\sum_i \beta_{i}\nu_i|\bc^* -\bx_i|_h^{\nu_i-2}}.
    \end{equation}
\end{thm}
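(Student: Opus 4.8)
The plan is to mirror the argument behind Theorem \ref{thm.c*}, while exploiting the key structural fact that the elevation $h>0$ removes the non-differentiability that forced the case distinctions (i)--(ii) there. First I would observe that at any optimum of \eqref{eq.BS.elev} the power constraints must be active: if $P_k>\beta_k|\bc-\bx_k|_h^{\nu_k}$ for some $k$, one could decrease $P_k$ to the bound and strictly lower the objective without losing feasibility. Hence $P_k^*=\beta_k|\bc^*-\bx_k|_h^{\nu_k}$ and the problem collapses to the unconstrained minimization of
\[
f(\bc)=\sum_k \beta_k |\bc-\bx_k|_h^{\nu_k}=\sum_k \beta_k\left(|\bc-\bx_k|^2+h^2\right)^{\nu_k/2}.
\]

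Next I would establish convexity of $f$. Each map $\bc\mapsto|\bc-\bx_k|_h$ is the Euclidean norm of the affine image $\bc\mapsto(\bc-\bx_k,\,h)$ and is therefore convex; composing with the nondecreasing convex function $t\mapsto t^{\nu_k}$ on $t\ge 0$ (legitimate since $\nu_k\ge 1$ and the inner function is nonnegative) preserves convexity of each summand, so $f$ is convex. Since $f$ is also coercive, a global minimizer $\bc^*$ exists. The \emph{crucial} point is that $|\bc-\bx_k|_h\ge h>0$ for every $\bc$, so no summand ever vanishes and every term is continuously differentiable for all $\nu_k\ge 1$; this is precisely why the hypothesis here needs only $\nu_k\ge 1$ and none of the extra provisos of Theorem \ref{thm.c*}.

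With $f$ convex and differentiable, global optimality is equivalent to the stationarity condition $\nabla f(\bc^*)=0$. By the chain rule, $\nabla_\bc|\bc-\bx_k|_h=(\bc-\bx_k)/|\bc-\bx_k|_h$, so that
\[
\nabla f(\bc)=\sum_k \beta_k\nu_k|\bc-\bx_k|_h^{\nu_k-2}\,(\bc-\bx_k).
\]
Setting this to zero at $\bc^*$, collecting the terms proportional to $\bc^*$, and dividing by the strictly positive scalar $\sum_i\beta_i\nu_i|\bc^*-\bx_i|_h^{\nu_i-2}$ yields exactly \eqref{eq.c*.h}. Because every weight $\beta_k\nu_k|\bc^*-\bx_k|_h^{\nu_k-2}$ is strictly positive, the coefficients $\theta_k$ are nonnegative and sum to one, so $\bc^*$ is a genuine convex combination lying in $\cnv\{\bx_k\}$.

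I expect that, relative to the ground-level case, there is essentially no obstacle here: the would-be difficulty, namely non-differentiability of a term when $\bc^*=\bx_k$ with $\nu_k<2$, simply cannot occur because $|\bc-\bx_k|_h$ is bounded below by $h>0$, leaving the factor $|\bc-\bx_k|_h^{\nu_k-2}$ finite and smooth even at $\nu_k=1$. The only point worth stating carefully is therefore this regularizing role of $h$. As in Theorem \ref{thm.c*}, \eqref{eq.c*.h} is an implicit fixed-point characterization (the weights $\theta_k$ depend on $\bc^*$) rather than a closed form, and no further work is needed to justify that.
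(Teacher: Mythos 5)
Your argument is correct and is in substance the same as the paper's: the paper obtains the identical stationarity equation from the KKT conditions of the convex program (complementary slackness with $\lambda_k=1$ forces $P_k^*=\beta_k|\bc^*-\bx_k|_h^{\nu_k}$, and the remaining stationarity condition in $\bc$ is exactly your $\nabla f(\bc^*)=0$ after eliminating the powers). Your explicit observation that $|\bc-\bx_k|_h\ge h>0$ keeps every term smooth and thus removes the singular-case analysis of Theorem~\ref{thm.c*} is precisely the right justification for why only $\nu_k\ge 1$ is assumed here; the paper leaves this implicit (and, strictly speaking, the statement requires $h\neq 0$ for that to hold).
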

\begin{proof}

Follows from the proof of Theorem \ref{thm.c*}, see \cite{Kalantari-20}.
\end{proof}

\vspace*{-.71\baselineskip}
A number of properties/solutions pointed above also hold for the elevated BS problem in terms of its 2-D projected location
$\bc^*$. In particular, Corollaries \ref{cor.conv}-3, Propositions 2, 3, do hold for the elevated BS as well. Proposition 1 is strengthened as follows.

\begin{prop}
\label{prop.unique.c*.h}
The optimal elevated base station location is unique for any $\nu_k\ge 1$ if $h\neq 0$.
\end{prop}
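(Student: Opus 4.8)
The plan is to prove uniqueness through strict convexity of the equivalent unconstrained objective, mirroring the argument of Proposition \ref{prop.unique.c*}; the novelty is that a nonzero elevation $h$ restores strict convexity even in the borderline case $\nu_k = 1$, which is precisely where uniqueness failed at ground level. First I would observe that, since transmitting with the least per-user power is optimal (Theorem \ref{thm.c*.h}), the elevated problem \eqref{eq.BS.elev} is equivalent to the unconstrained minimization $\min_{\bc}\sum_k \beta_k |\bc - \bx_k|_h^{\nu_k}$. It then suffices to show that each summand $\phi_k(\bc) = \beta_k(|\bc - \bx_k|^2 + h^2)^{\nu_k/2}$ is \emph{strictly} convex whenever $h \neq 0$ and $\nu_k \ge 1$, because a sum of strictly convex functions is strictly convex and thus admits at most one minimizer \cite{Boyd2004}; existence is guaranteed by coercivity, as each term grows without bound as $|\bc| \to \infty$.

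The core step is to compute and sign the Hessian of $\phi_k$. Writing $\bxi = \bc - \bx_k$ and $s = |\bxi|^2 + h^2$, which is strictly positive precisely because $h \neq 0$, a direct differentiation yields
\begin{equation}
\nabla^2 \phi_k = \beta_k \nu_k\, s^{\nu_k/2 - 2}\left[ s\, I + (\nu_k - 2)\,\bxi\bxi^T \right].
\end{equation}
The bracketed matrix is diagonalized by $\bxi$ and its orthogonal complement: its eigenvalue is $s > 0$ in every direction orthogonal to $\bxi$, and $s + (\nu_k - 2)|\bxi|^2 = (\nu_k - 1)|\bxi|^2 + h^2$ along $\bxi$. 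For $\nu_k \ge 1$ both are strictly positive, the radial one because $(\nu_k - 1) \ge 0$ forces $(\nu_k - 1)|\bxi|^2 + h^2 \ge h^2 > 0$. Hence $\nabla^2 \phi_k \succ 0$ everywhere, so $\phi_k$ is strictly convex and the claim follows.

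The step I expect to require the most care is the regime $1 \le \nu_k < 2$, where the rank-one term $(\nu_k - 2)\bxi\bxi^T$ is negative semidefinite and could, a priori, cancel the $s\,I$ contribution; this is exactly the mechanism producing non-uniqueness at $\nu_k = 1$, $h = 0$ in Proposition \ref{prop.nu_1}. The radial eigenvalue $(\nu_k - 1)|\bxi|^2 + h^2$ isolates the difficulty: at $h = 0$, $\nu_k = 1$ it vanishes identically, leaving the flat directions that permit a continuum of optima, whereas any $h \neq 0$ supplies the strictly positive floor $h^2$ and eliminates them. I would also note that, in contrast with the ground-level problem, $\phi_k$ is smooth everywhere---including at $\bc = \bx_k$---since $s \ge h^2 > 0$, so no separate treatment of the singular point is needed.
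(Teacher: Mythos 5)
Your proposal is correct and follows essentially the same route as the paper, which likewise reduces to $\min_{\bc}\sum_k \beta_k |\bc-\bx_k|_h^{\nu_k}$ and invokes strict convexity of $|\bx|_h^{\nu}$ for $\nu\ge 1$, $h\neq 0$. The only difference is that you actually verify that strict convexity via the Hessian eigenvalues $s$ and $(\nu_k-1)|\bxi|^2+h^2$, a detail the paper asserts without proof; your computation checks out.
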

\begin{proof}
Follows the steps of that of Proposition \ref{prop.unique.c*} by observing that $|\bx|_h^{\nu}$ is strictly convex for any $\nu \ge 1$ if $h\neq 0$.
\end{proof}

\subsection{Additional location constraints}
\label{sec.extra.const}

When locating a BS in practice, quite often there are some additional constraints due to existing infrastructure, such as a limited roof-top area available for a BS location. In such a case, the problem (P2) can be modified to include extra constraint on BS location as follows:
\begin{equation}
\label{eq.P2.ad}
\begin{aligned}\notag
\text{\textrm{(P3)}}\ \underset{\{P_{k}\}, \bc}{\text{min}} \sum_k P_{k}\ \ \text{s.t.}\ \ &P_{k} \ge  \beta_k |\bc - \bx_k|^{\nu_k}, &|\bc-\ba_l|\le r_l,
\end{aligned}
\end{equation}
where $k=1...N,\ l=1..L$; the additional constraints $|\bc-\ba_l|\le r_l$ account for physical limitations or preferences, as discussed above, for given $\ba_l, r_l$.
An optimal BS location under these extra constraints can be characterized as follows.

\begin{thm}
\label{thm.c*.loc}
When (i) $\nu_k\ge 2$, or/and (ii) $\nu_k\ge 1$ and $\bc^* \neq \bx_k$, the optimal BS location for the problem (P3) can be expressed as a convex combination of user and constraint locations:
    \begin{equation}
    \label{eq.c*.loc.1}
\bc^* = \sum_{k=1}^{N+L} \theta_k \bx_k,
    \end{equation}
where $\bx_{N+l}=\ba_l,\ l=1...L$,
    \bal
    \label{eq.c*.loc.2}
    &\theta_k = \Theta^{-1}\nu_k \beta_k |\bc^*-\bx_k|^{\nu_k-2},\ k=1...N,\\
    \label{eq.c*.loc.3}
    &\theta_{N+l} =2\Theta^{-1}\mu_l,\ l=1...L,\\
    &\Theta = \sum_{k=1}^N \beta_k \nu_k|\bc^*-\bx_k|^{\nu_k-2}+ 2\sum_{l=1}^L \mu_l,
    \eal
and dual variables $\mu_l\ge 0$ are found from
\bal
\label{eq.c*.loc.4}
\mu_l(|\bc^*-\ba_l|-r_l)=0,
\eal
subject to $|\bc^*-\ba_l| \le r_l$. Signaling with the least per-user power is optimal: $P_k^* = \beta_k|\bc^*-\bx_k|^{\nu_k}$.
\end{thm}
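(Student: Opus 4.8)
The plan is to follow the proof of Theorem~\ref{thm.c*}, incorporating the extra ball constraints through Lagrange multipliers. First I would note, exactly as in the proof of Proposition~\ref{prop.unique.c*}, that for any fixed feasible $\bc$ the sum $\sum_k P_k$ is minimized by the tightest allocation $P_k=\beta_k|\bc-\bx_k|^{\nu_k}$, so (P3) reduces to the geometric program
\begin{equation*}
\min_{\bc}\ \sum_{k=1}^N \beta_k |\bc-\bx_k|^{\nu_k}\quad\text{s.t.}\quad |\bc-\ba_l|\le r_l,\ \ l=1,\dots,L .
\end{equation*}
Each $|\cdot|^{\nu_k}$ is convex for $\nu_k\ge1$ and each ball is convex, so this is a convex program; hence its KKT conditions are sufficient for global optimality and, under Slater's condition (the $L$ balls sharing an interior point), also necessary.

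Next I would replace each constraint by the equivalent smooth form $|\bc-\ba_l|^2-r_l^2\le0$ (equivalent since $r_l\ge0$), which removes the nonsmoothness of $|\bc-\ba_l|$ and is what accounts for the factor $2$ appearing in \eqref{eq.c*.loc.3}. Writing the Lagrangian $\mathcal{L}=\sum_k\beta_k|\bc-\bx_k|^{\nu_k}+\sum_l\mu_l(|\bc-\ba_l|^2-r_l^2)$ with $\mu_l\ge0$, and using $\nabla_{\bc}|\bc-\bx_k|^{\nu_k}=\nu_k|\bc-\bx_k|^{\nu_k-2}(\bc-\bx_k)$, the stationarity condition $\nabla_{\bc}\mathcal{L}=0$ becomes
\begin{equation*}
\sum_{k=1}^N \nu_k\beta_k|\bc^*-\bx_k|^{\nu_k-2}(\bc^*-\bx_k)+2\sum_{l=1}^L\mu_l(\bc^*-\ba_l)=0 .
\end{equation*}
Collecting the coefficient of $\bc^*$ into $\Theta=\sum_k\beta_k\nu_k|\bc^*-\bx_k|^{\nu_k-2}+2\sum_l\mu_l$ and solving for $\bc^*$ yields \eqref{eq.c*.loc.1}--\eqref{eq.c*.loc.3} directly, with $\bx_{N+l}=\ba_l$; complementary slackness $\mu_l(|\bc^*-\ba_l|^2-r_l^2)=0$ is equivalent to \eqref{eq.c*.loc.4} because $|\bc^*-\ba_l|+r_l>0$.

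To finish I would verify that the $\theta_k$ form a genuine convex combination: each numerator is nonnegative ($\beta_k,\nu_k>0$ and $|\bc^*-\bx_k|^{\nu_k-2}\ge0$ for the user terms, and $\mu_l\ge0$ by dual feasibility for the constraint terms), $\Theta>0$, and the $\theta_k$ sum to one by the definition of $\Theta$. Hence $0\le\theta_k\le1$, and the reasoning of Corollary~\ref{cor.conv} extends to place $\bc^*\in\cnv\{\bx_1,\dots,\bx_N,\ba_1,\dots,\ba_L\}$.

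The main obstacle is the same subtlety as in Theorem~\ref{thm.c*}: the weight $|\bc^*-\bx_k|^{\nu_k-2}$ in \eqref{eq.c*.loc.2} is ill-defined when $\bc^*=\bx_k$ and $\nu_k<2$ (a $0^{\nu_k-2}$ with negative exponent), and $|\bc-\bx_k|^{\nu_k}$ is moreover nonsmooth at $\bx_k$ when $\nu_k=1$. Hypotheses (i) $\nu_k\ge2$ and (ii) $\nu_k\ge1$ with $\bc^*\ne\bx_k$ are exactly what exclude these degenerate cases and license the clean stationarity computation above, which is why the stated conditions reproduce those of Theorem~\ref{thm.c*}; any $k$ with $\nu_k=1$ and $\bc^*=\bx_k$ would instead require a subgradient treatment. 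Verifying Slater's condition is a secondary, routine point, needing only that the intersection of the $L$ balls have nonempty interior.
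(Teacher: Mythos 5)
Your proof is correct and follows essentially the same route as the paper: KKT conditions for the convex problem (as in the Appendix proof of Theorem \ref{thm.c*}), with the ball constraints written in squared form so that their gradients $2(\bc-\ba_l)$ produce exactly the factor $2$ in \eqref{eq.c*.loc.3}, and with the same singular-case caveat handled by hypotheses (i)/(ii). Eliminating the $P_k$ first via the tightest allocation rather than carrying the multipliers $\lambda_k=1$ through the KKT system is only a cosmetic difference.
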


\section{Conclusion}
In this paper, the problem of determining an optimal BS location for a given set of users was formulated as a convex optimization problem to minimize the total BS power subject to QoS constraints. Its globally-optimal solution was expressed as a convex combination of user locations. Based on this, a number of closed-form solutions were obtained, which revealed the impact of system and user parameters, propagation pathloss, as well as the overall system geometry. The symmetry in the user set was shown to make the optimal BS location independent of pathloss exponent, which is not true for asymmetric sets. These results provide insights unavailable from numerical algorithms, and allow one to develop design guidelines for more complicated systems. 


\section{Appendix: Proof of Theorem \ref{thm.c*}}

Since the problem (P2) is convex and the strong duality holds (since Slater condition is satisfied), its KKT conditions are sufficient for optimality \cite{Boyd2004}. Its Lagrangian is
\bal
L(P_k, \bc) = \sum_k P_{k}+ \sum_k \lambda_k (\beta_k |\bc - \bx_k|^{\nu_k} - P_{k}),
\eal
where $\gl_k \ge 0$ are Lagrange multipliers responsible for the power constraints. First, we consider the non-singular case, when $\bc^* \neq \bx_k\ \forall k$, and deal with the singular case later on. In the non-singular case, the KKT conditions take the following form
\bal
\label{sta-con}
&\sum_k \gl_k \beta_k \nu_k (\bc -\bx_k) |\bc-\bx_k|^{\nu_k-2}=0,\ 1-\lambda_k=0,\\
\label{com-slac}
&\lambda_k (\beta_k |\bc - \bx_k|^{\nu_k} - P_{k})=0,\\
\label{primal-con}
& P_{k} \ge \beta_k |\bc - \bx_k|^{\nu_k} , \lambda_k \ge 0,
\eal
where \eqref{sta-con} are the stationary conditions, \eqref{com-slac} are the complementary slackness conditions, and \eqref{primal-con} are primal and dual feasibility conditions. The 1st condition in \eqref{sta-con} was obtained from
\begin{equation}
\label{eq.dx}
\partial{|\bx|^{\nu}}/\partial{\bx} = \nu \bx|\bx|^{\nu-2}
\end{equation}
if $\bx \neq 0$, which always holds in the non-singular case. The 2nd condition in \eqref{sta-con} implies $\lambda_k=1$ so that, from \eqref{com-slac}, $P_{k}= \beta_k |\bc - \bx_k|^{\nu_k}$, i.e. transmitting with the least required power for each user is optimal. Combining this with the 1st condition in \eqref{sta-con} results, after some manipulations, in \eqref{eq.c*}.

The singular case, when $\bc^* = \bx_k$ for some $k$, is more involved as, in this case, \eqref{eq.dx} and hence 1st condition in \eqref{sta-con} do not hold (since $\bx =0$ and $|\bx|$ is not differentiable at $\bx=0$). One way to deal with this difficulty is to consider a regularized version of (P2), see \cite{Kalantari-20} for details.


\end{document}